\newtheorem{theorem}{Theorem}
\newcommand{\dimplus}{\ensuremath{\;\hat{+}\;}}
\newcommand{\tplus}{\ensuremath{\;\Diamond\;}}
\newcommand{\tplusNS}{\ensuremath{\Diamond}}
\newcommand{\dimtimes}{\ensuremath{\;\hat{\times}\;}}
\newcommand{\ttimes}{\ensuremath{\;\triangle\;}}
\newcommand{\tassign}[2]{\ensuremath{\; \lhd_{#2}^{#1}\;}}
\newcommand{\tassignNS}{\ensuremath{ \lhd}}
\newcommand{\strachey}[1]{\ensuremath{[\![ #1 ]\!]}}
\newcommand{\openS}{[\![ }
\newcommand{\closeS}{ ]\!]}
\begin{document}

\title{A Discipline of Programming with Quantities}

\author*{\fnm{Steve} \sur{McKeever}} \email{steve.mckeever@im.uu.se}
\affil{\orgdiv{Department of Informatics and Media}, \orgname{Uppsala University}, \orgaddress{\country{Sweden}}}


\abstract{
In scientific and engineering applications, physical quantities embodied as units of measurement (UoM) are frequently used.
The loss of the Mars climate orbiter, attributed to a confusion between the metric and imperial unit systems, 
popularised the disastrous consequences of incorrectly handling measurement values. 
Dimensional analysis can be used to ensure expressions containing annotated values are evaluated correctly.
This has led to the development of a large number of libraries, languages and validators to ensure developers can specify and verify
UoM information in their designs and codes. 
Many tools can also automatically convert values between commensurable UoM, such as yards and metres.
However these systems do not differentiate between quantities and dimensions. 
For instance torque and work, which share the same UoM, can not be interchanged because they do not represent the same entity.
We present a named quantity layer that complements dimensional analysis by ensuring that values of different quantities
are safely managed. 
Our technique is a mixture of analysis and discipline, where expressions involving multiplications are relegated to functions,
in order to ensure that named quantities are handled soundly.
}

\keywords{quantities, units of measurement, quantity checking, dimensional analysis}


\maketitle

\section{Introduction}
\label{intro}

\begin{table*}[t]
\begin{center}
\caption{Some SI standard base and derived units.}
\label{tab:derived}
{\small
\begin{tabular}{l|c|l|l}
Name & \verb!   !Symbol \verb!   !& Quantity \verb!   ! & \verb!   !Base Units in SI  \\
\hline 
metre & $l$ & \textsf{length} &\verb!   !$\mathsf{metre}$ \\
kilogram & $m$ & \textsf{mass} &\verb!   !$\mathsf{kg}$ \\
second & $t$ & \textsf{time} & \verb!   !$\mathsf{sec}$ \\
hertz	& $Hz$	& \textsf{frequency} &\verb!   !$\mathsf{sec}^{-1}$ \\
newton & $N$ & \textsf{force, weight} & \verb!   !$\mathsf{metre} \times \mathsf{kg} \times \mathsf{sec}^{-2}$ \\
pascal & $Pa$ & \textsf{pressure, stress} & \verb!   !$\mathsf{metre}^{-1} \times \mathsf{kg}\times \mathsf{sec}^{-2}$ \\
joule	& $J$	& \textsf{energy, work} & \verb!   !$\mathsf{metre}^{2} \times \mathsf{kg} \times \mathsf{sec}^{-2}$ \\
newton metre	& $N\;m$	& \textsf{torque} & \verb!   !$\mathsf{metre}^{2} \times \mathsf{kg} \times \mathsf{sec}^{-2}$ \\
watt	& $W$	& \textsf{power, flux} & \verb!   !$\mathsf{metre}^{2} \times \mathsf{kg} \times \mathsf{sec}^{-3}$ \\
square metre & $m^{2}$ & \textsf{area} & \verb!   !$\mathsf{metre}^{2}$ \\
cubic metre & $m^{3}$ & \textsf{volume} & \verb!   !$\mathsf{metre}^{3}$ \\
metre per second & $m/s$ & \textsf{speed, velocity} & \verb!   !$\mathsf{metre} \times \mathsf{sec}^{-1}$ \\
metre per sec squared	 & $m/s^{2}$ & \textsf{acceleration} & \verb!   !$\mathsf{metre} \times \mathsf{sec}^{-2}$ \\
\end{tabular}}
\end{center}
\end{table*}

Humans have used local units of measurement since the days of early trade, enhanced over time to fulfil the 
accuracy and interoperable needs of science and technology.
The technical definition of a physical quantity is a ``property of a phenomenon, body, or substance, where the 
property has a magnitude that can be expressed as a number and a reference''~\cite{vim2012}.
Ensuring numerical values that denote physical quantities are handled correctly is an essential
requirement for the design and development of any engineering application. 
Infamous examples such as the Mars Climate Orbiter~\cite{mars} or the Gimli Glider incident~\cite{flight143}
substantiate this. 
With ubiquitous digitalisation, and removal of humans in the loop, the need to faithfully represent
and manipulate quantities in physical systems is ever increasing. Programming languages allow developers to describe how 
to evaluate numeric expressions but not how to detect inappropriate actions on quantities.

Dimensions are physical quantities that can be measured, while units are arbitrary labels that correspond to a 
given dimension to make it relative. 
For example a dimension is length, whereas a \textsf{metre} is a relative unit that describes length.
Units of measure can be defined in the most generic form as either
\textit{base quantities} or \textit{derived quantities}. 
The base quantities are the basic building blocks, and the derived quantities are built from these. The base quantities and derived quantities together form a way of describing any part of the physical world~\cite{sonin2001dimensional}. 
For example length (\textsf{metre}) is a base quantity, and so is time (\textsf{second}). 
If these two base quantities are combined they express velocity ($\mathsf{metre}/\mathsf{second}$ or 
$\mathsf{metre} \times \mathsf{second}^{-1}$) 
which is a derived quantity. 
The International System of Units (SI) defines seven base quantities (length, mass, time,
electric current, thermodynamic temperature, amount of substance,
and luminous intensity) as well as a corresponding unit for each quantity~\cite{nis2015}.
Some popular examples of both base and derived units are shown in Table~\ref{tab:derived}.
It is common for quantities to be declared as a number (the magnitude of the quantity) with an associated 
unit~\cite{bipm-dim}. 

There are many ways in which software processes and development can accommodate units of measurement in this manner~\cite{mcmodelsward21}.
Adding units to conventional programming languages goes back to the 
1970s~\cite{IncoporatingUnits} and early 80s
with proposals to extend Fortran~\cite{gehani} and then Pascal~\cite{dreiheller}. 
Hilfinger~\cite{Hilfinger} showed how to exploit
Ada's abstraction facilities, namely operator overloading and type parameterisation, to 
assign attributes for UoM to variables and values. 
The emergence of object oriented programming languages enabled developers to implement
UoM either through a class hierarchy of units and their derived forms, or through the Quantity 
pattern~\cite{fowler}. There are a large number of libraries for all popular object oriented programming
languages~\cite{unit_oscar18} that support this approach~\cite{unit_mc19}.

Applying UoM annotations requires an advanced checker to ensure variables and method calls
are handled soundly. 
Maxwell introduced the notion of a system of quantities with a corresponding system of units. 
This approach allows scientists working with different measurement systems to communicate more easily~\cite{hall_modelsward}.  
Two units are compatible if they both can be represented as the same derived quantity.
For instance degrees \textsf{Celsius} is compatible with \textsf{Fahrenheit}. 
Values in \textsf{Celsius} can be \textit{converted} to values
in  \textsf{Fahrenheit}, and vice versa. 
This notion of interoperability allows equations to be stated in terms of their dimensions and not their base values.
These can be computed when the equation is evaluated.
Two values can be added or subtracted only if their units are the same. Multiplication
and division either add or subtract the two units product of power representations, assuming both values
are compatible. Although, converting values to ensure compatibility can create round-off errors.
Once a variable has been defined to be of a given unit, then it should remain as such.
Checking that all annotated entities behave according to these rules ensures both \textit{completeness} and
\textit{correctness} of the program, and can be undertaken before the code is run.

However two values that share the same UoM might not represent the same \textit{kinds of quantities} (KOQ)~\cite{marcuspython}.
For example, torque is a rotational force which causes an object to rotate about an axis while work
is the result of a force acting over some distance. 
Surface tension can be described as \textsf{newtons per meter} or
\textsf{kilogram per second squared}, and even though they equate, they represent different quantities. 
Our focus is to present a simple set of rules for arithmetic and function calls that allow quantities to be named and handled \textit{safely}. This is not as straightforward as preserving the names of quantities throughout the programme text.
Multiplication will generate a new quantity so it is very likely that information is lost in intermediate 
stages of a calculation. We propose that functions, whose return KOQ are known, are used to regain information
when calculations use multiplication. In this manner a discipline of programming with quantities is suggested, 
one in which equations involving multiplication are defined as functions to ensure quantities in the main block of code are always known.

This paper extends~\cite{impquant} to include a notion of safe KOQ arithmetic and a demonstration of 
how this discipline deals with information loss, 
along with a streamlined functional presentation of the checking algorithm.
In Section~\ref{background} we describe how UoM are typically implemented and argue for a more
comprehensive representation, while acknowledging the drawbacks of adding complexity to the development process.
In Section~\ref{unitexp} we introduce unit expressions and dimensional analysis.
In Section~\ref{quantities} we describe a simple algebra of named quantities.
and show how they can be maintained while evaluating unit assignments and function calls.
We also discuss some of the obstacles to implementing named quantities and UoM in general.
Finally, in Section~\ref{conc} we summarise quantity validation and describe avenues of current research. 

\section{Background}
\label{background}

One can assert the physical dimension of length with the unit \textsf{metre}
and the magnitude \verb!10! (\verb!10m!). However, the same length can also be expressed using other units such as 
\textsf{centimetres} or \textsf{kilometres}, at the same time changing the magnitude (\verb!1000cm! or \verb!0.01km!). 
Although these examples are all based on the International System of Units (SI)
there exists several other systems, such as the \textit{Imperial system} where \textsf{yards} and \textsf{miles} would be used.
On this basis a very simple object oriented design would entail a superclass for each dimension, such as \verb!Length!, and then specific subclasses for the various units, each of which would contain overloaded operators to 
ensure unit based arithmetic could be performed correctly. 

\medskip

{\small \begin{verbatim}
Length l1 = new LengthMetre (5.0);
Length l2 = new LengthYard (4.0);
Length l3 = l1.addlength (l2);  
\end{verbatim}}

\medskip

\noindent The  \verb!addlength! command would convert \verb!l2! into metres and perform the addition.
We could extend our object oriented design to create a class hierarchy for each base type and use a tree
structure to construct derived types. However this would result in hundreds of units and 
thousands of conversions. 

Fortunately, a normal form exists which makes storage
and comparison a lot easier. Any system of units can be derived from the base units 
as a product of powers of those base units: \verb!base!$^{e_{1}} \times $\verb!base!$^{e_{2}} \times  
\dots $\verb!base!$^{e_{n}}$, 
where the exponents $e_{1},\dots,e_{n}$ are rational numbers. Thus an SI unit can be represented as a 7-tuple 
$\langle e_{1},\dots,e_{7} \rangle$ where $e_{i}$ denotes the $i$-th base unit; or in our case $e_{1}$ denotes
\textsf{length}, $e_{2}$ \textsf{mass}, $e_{3}$ \textsf{time} and so on. 

Performing calculations in relation to quantities, dimensions and units is subtle and can easily lead to mistakes.
A dimensional analysis needs to ensure that (1) two physical quantities can only be equated if they have the same dimensions; (2)  two physical quantities can only be added if they have the same dimensions (known as the \textit{Principle of Dimensional Homogeneity}); 
(3) the dimensions of the multiplication of two quantities is given by the addition of the dimensions of the two quantities. 
If we only consider the three common dimensions of \textsf{length}, \textsf{mass} and \textsf{time}
then we can capture the rules for addition and multiplication. 
\[
\begin{array}{lcll}
(l_{1},m_{1},t_{1}) \dimplus (l_{2},m_{2},t_{2}) & = & (l_{1},m_{1},t_{1}), \mathrm{if} \ l_{1} = l_{2} \wedge m_{1} = m_{2}  \wedge t_{1} = t_{2} \\
(l_{1},m_{1},t_{1}) \dimtimes (l_{2},m_{2},t_{2}) & = & (l_{1}+l_{2},m_{1}+m_{2},t_{1}+t_{2}) \\
\end{array}
\]
The full 7-tuple can be reflected in a typical programming language as an array of integers.
Within an object oriented class structure the array can be coupled with conversion, equality and numeric 
operators to form
a \verb!Unit! abstract data type which ensures only UoM correct arithmetic is undertaken. In Java this would be represented as:

\medskip

{\small
\begin{verbatim} 
class Unit {
  private int [7] dimension;
  private float [7] conversionFactor;
  private int [7] offset;
  ...
  boolean isCompatibleWith (Unit u);
  boolean equals (Unit u);
  Unit multiplyUnits (Unit u);
  Unit divideUnits (Unit u);
}
\end{verbatim}}

\medskip

\noindent This is the basis of the Quantity pattern~\cite{fowler} in which quantity values are represented as a pair: the numerical value, $\{Q\}$, and
the unit of measure, $[Q]$, such that $Q = \{Q\} \cdot [Q]$.

\medskip

{\small
\begin{verbatim}
class Quantity {
  private float value;
  private Unit unit;
  ....
}
\end{verbatim}}

\medskip

\noindent The Quantity pattern provides a means of annotating variable declarations and method
signatures with behavioural UoM specifications. Most libraries for modern programming languages implement this approach 
but, as was found in the survey of~\cite{omar20}, do not satisfy the core requirements of the scientific programming community.
Interview subjects felt that UoM libraries were
inconvenient: they did not interact well with the eco-system, 
incurred additional and often cumbersome syntax, 
had unwanted performance costs as the checking was undertaken at run-time,
required effort to learn and costly rewrites to support.

A value is almost always represented as a float in the Quantity pattern so UoM checking will only apply to that 
number representation. Other number representations will need to be converted and rounding errors absorbed.
Similarly the exponents are typically represented as integers,
but there are rare instances when fractional exponents might be required for intermediate results even
when there are no SI units that requires them.
This inflexibility coupled with an unwieldy syntax and the need for manual conversions are reasons for
the poor adoption of such libraries.

Increasing uptake for quantity aware code requires a language neutral interface that allows programmers to 
manage UoM in an
indistinguishable language agnostic fashion, either as part of the core language (e.g. Swift~\cite{swift} and F\#~\cite{fsharp}) 
or through the use of a separate validator~\cite{Osprey,uomvalidator,cunits,punits}).
Moreover, language based solutions enable UoM checking to be undertaken at compile-time, detecting errors early while ensuring no run-time overheads are required. What these systems fail to provide is a system that checks for
kinds of quantities and ensures entities that have the same UoM but different KOQ are managed safely.

\section{Unit Expressions}
\label{unitexp}

Performing calculations in relation to quantities, dimensions and units is often complex and can easily lead to mistakes.
We shall begin by defining dimensional analysis for a hypothetical programming language extended with unit variable declarations, \textit{udecs},
and only consider the three common dimensions, \textit{dims}. 
Hence \verb!velocity!, namely $\mathsf{length \times time^{-1}}$, is represented as $(1,0,-1)$.
We use the standard \verb!float! implementation to approximate for real numbers but as we are not performing arithmetic any representation would suffice. 
A program consists of a sequence of declarations followed by a sequence of statements.

\bigskip

\noindent \begin{math}
\begin{array}{lcl}
\mathit{prog} & ::= & \tt{begin} \ \mathit{udecs} \; \tt{in} \; \mathit{ustmts} \ \tt{end} \\
\mathit{udecs} & ::= & \mathit{udec}_{1} {\tt ;} \; \ldots {\tt ;} \, \mathit{udec}_{m} \\
\mathit{dims} & ::= &  ( \mathit{int}, \; \mathit{int}, \; \mathit{int}) \\
\mathit{udec} & ::= &  \mathit{uv} \; \verb!: float of !\mathit{dims} \\
\end{array}
\end{math}

\bigskip

It has the standard statement constructs, \textit{ustmt}, and boolean expressions, \textit{bexp},
but we will only focus on quantity variable assignments and conditionals as these affect 
unit variables, \textit{uv}. Further constructs such as while loops would create a more
complete programming language but not add to the presentation.
Unit arithmetic expressions, \textit{uexp}, impose syntactic restrictions so that their soundness can be inferred using the algebra of quantities.
By creating a separate syntax for unit expressions we can distinguish between scalar values and \textit{unitless quantities}, 
namely values that have the dimensions $(0,0,0)$ such as moisture content.

\bigskip

\noindent \begin{math}
\begin{array}{lcl}
\mathit{ustmts} & ::= & \mathit{ustmt}_{1} {\tt ; \;}  \ldots{\tt \; ; } \, \mathit{ustmt}_{m} \\
\mathit{ustmt} & ::= & \mathit{uv} \; \verb!:=!\; \mathit{uexp} \;  \\ 
&  & \lvert \ \tt{if} \  \mathit{bexp} \  \verb!then!  \  \mathit{ustmts}_{1} \  \verb!else! \  \mathit{ustmts}_{2} \\ 
\mathit{uexp} & ::= &  \mathit{uv} \; \lvert \;  \mathit{uexp}_{1}  {\tt +} \;\mathit{uexp}_{2} \; 
\lvert  \;  \mathit{r}  \;{\tt *} \; \mathit{uexp} \;  
\lvert  \;  \mathit{uexp}_{1}  \; {\tt *} \;\mathit{uexp}_{2} \; 
\end{array}
\end{math}

\bigskip

\begin{figure}[ht]
\centering
\begin{math}
\begin{array}{|l|}
\hline \\
\begin{array}{l}
\ \mathcal{P} \strachey{\tt{begin} \ \mathit{udecs} \; \tt{in} \; \mathit{ustmts} \ \tt{end} } \ = \ 
\mathcal{SM}\strachey{\mathit{ustmts}}_{(\mathcal{DS}\strachey{\mathit{udecs}})}
\end{array} \\
\\
\begin{array}{lcl}
\ \mathcal{DS}\strachey{\mathit{udec}_{1}{\tt ;} \; \ldots {\tt ;} \mathit{udec}_{m}} & = &
    \begin{array}[t]{l}
    \mathrm{let} \; \varrho_{1} = \mathcal{D} \strachey{\mathit{udec}_{1}}_{\{ \} } \\
    \verb!     ! \vdots \\
    \mathrm{in} \ \mathcal{D} \strachey{\mathit{udec}_{m}}_{\varrho_{n-1}}
    \end{array} \\

\ \mathcal{D}\strachey{\mathit{uv} \; \tt{: float \ of} \; d}_{\varrho} & = & \varrho \oplus \{ \mathit{uv} \mapsto d \} \\
\end{array} \\
    \\
\begin{array}{l}
\ \mathcal{SM}\strachey{\mathit{ustmt}_{1}\tt{;} \; \ldots \tt{;}  \mathit{ustmt}_{\mathit{m}}}_{\varrho}  \\
\hspace{0.4cm} \  \begin{array}{l} =  
\tt{DimValid}, 
\ \mathrm{if} \ \mathcal{S}\strachey{\mathit{ustmt}_{1}}_{\varrho} = \tt{DimValid} \wedge \, \cdots \, \wedge 
\ \mathcal{S}\strachey{\mathit{ustmt}_{\mathit{m}}}_{\varrho} = \tt{DimValid} \\
 =  \tt{DimFail} , \mathrm{otherwise} \\
  \end{array}
\end{array} \\
\\
\begin{array}{lcl}
\ \mathcal{S}\openS\mathit{uv} \;\verb!:=! \;\mathit{uexp}\closeS_{\varrho} & = &  \tt{DimValid}, \ \mathrm{if} \ \mathcal{UE}\strachey{\mathit{uexp}}_{\varrho} = (\varrho \ \mathit{uv})  \\
 & = &  \tt{DimFail} , \mathrm{otherwise} \\
 \end{array} \\
 \begin{array}{l}
\ \mathcal{S}\strachey{\tt{if} \  \mathit{bexp} \ \tt{then}  \  \mathit{ustmts}_{1} \  \tt{else} \ \mathit{ustmts}_{2} }_{\varrho} \\
 \hspace{0.4cm} \  \begin{array}{l}
      =  \tt{DimValid}, \ \mathrm{if} \ \mathcal{SM}\strachey{\mathit{ustmts}_{1}}_{\varrho} = \tt{DimValid} \wedge 
                \mathcal{SM}\strachey{\mathit{ustmts}_{2}}_{\varrho} = \tt{DimValid} \\
      =  \tt{DimFail}, \ \mathrm{otherwise}
      \end{array} \\
 \end{array} \\ \\
\ \ \mathcal{UE} \ : \ \mathit{uexp} \rightarrow (\mathit{uv} \rightarrow  \mathit{dims}) \rightharpoonup \mathit{dims} \\
 \begin{array}{lcl}
\ \mathcal{UE} \strachey{\mathit{uv}}_{\varrho} & = & \varrho \ \mathit{uv} \\
\ \mathcal{UE} \strachey{\mathit{uexp}_{1} \; \tt{+} \; \mathit{uexp}_{2} }_{\varrho} & = & 
     \mathcal{UE} \strachey{\mathit{uexp}_{1}}_{\varrho} \dimplus  \mathcal{UE} \strachey{\mathit{uexp}_{2} }_{\varrho} \\
\  \mathcal{UE} \strachey{\mathit{r} \; \tt{*} \; \mathit{uexp} }_{\varrho} & = &  \mathcal{UE} \strachey{\mathit{uexp}}_{\varrho} \\
\ \mathcal{UE} \strachey{\mathit{uexp}_{1} \; \tt{*} \; \mathit{uexp}_{2} }_{\varrho} & = & 
     \mathcal{UE} \strachey{\mathit{uexp}_{1}}_{\varrho} \dimtimes  \mathcal{UE} \strachey{\mathit{uexp}_{2} }_{\varrho} \\
 \end{array} \\ \\
 \hline 
 \end{array} 
\end{math}
\caption{Dimensional Analysis rules for declarations, statements and expressions.}
\label{fig:quantityderivation}
\end{figure}

In Figure~\ref{fig:quantityderivation} we present the dimensional analysis rules for programs.
The rules for declarations, $\mathcal{DS}$, build an environment, $\varrho$,
mapping variables to their dimensions. The environment will not change throughout the lifetime of the block.
Thus, once a variable has been defined to be of a given quantity, then it will remain as such. Many library based systems allow
programmers to change the dimensions of unit variables as they are objects of type \verb!Quantity!, namely a mutable array.
Once $\varrho$ has been built, it will be used to perform dimensional analysis on the statements.
The rules for statements, $\mathcal{SM}$, return 
either \verb!DimValid! or \verb!DimFail! depending on whether a given statement uses quantities correctly or not.
An assignment statement is valid only if the quantity of the unit expression is dimensionally homogeneous with
the unit variable that it is being assigned to.
The rule for conditionals checks the dimensional validity of both true and false statements.
The rules for unit expressions, $\mathcal{UE}$, are partial and might not have a solution in the case of trying 
to add quantities that have different dimensions.
The rule for unit variables is just a lookup on the quantity environment $\varrho$.
The rule for addition ensures that both the left hand and right hand side
subexpressions have the same quantities as enforced by the operator $\!\dimplus\!\!$. 
The rules for multiplication allow constants to be applied, and multiplying two unit expressions 
will create a combined quantity, where
each dimension is summed as defined by the operator $\!\!\dimtimes\!\!$.

Dimensional analysis would be sufficient if only one unit system, such as the SI system, was required.
In such cases the base units of \textsf{metre}, \textsf{kilogram} and \textsf{second} could be implicit in implementations. 
Dimensionally correct unit expressions can be evaluated in much the same way as normal arithmetic expressions.
As this is rarely the case in scientific applications where a myriad of unit systems and magnitudes are used, 
we need to perform unit conversions before evaluating the arithmetic expression. 
This can be undertaken at compile-time~\cite{CellmlCooper} or at run-time.
Moreover we need to declare the units alongside their dimensions.
A variable denoting \textsf{torque} would be stored as $\{\verb!t! \mapsto ((\verb!Metre!,2),(\verb!Kilogram!,1),(\verb!Second!,-2))\}$ in the 
environment.

%
%
%

\section{Quantity Rules}
\label{quantities}

This section introduces named quantities, their rules and how they are supported in our typical programming language.
We include functions that provide a clear interface and the potential for more comprehensive checking.

\subsection{Expressions}

We adopt a similar approach to~\cite{Foster2013QuantitiesUA,HallQuant2020} in that quantities should be represented as a 3-tuple, 
and not as a 2-tuple mentioned previously. 
Consequently we add a quantity name, $\langle Q \rangle$,to the numerical value, $\{Q\}$, and
the \textsf{unit} of measure, $[Q]$, such that $Q = \langle Q \rangle \cdot \{Q\} \cdot [Q]$.

However, not all quantity variables in a programme will have a name such as \verb!Torque! or \verb!Work!. Some might denote an entity such
as \textsf{length} that could be in \textsf{metres} or \textsf{yards}, while another might be a variable used to store some temporary value.
Neither of these need to be named. Using an algebraic data type, we define named quantities as:
\[
\verb!type quantname = Named of string | Noname!
\]
We are now in a position to define the rules for adding and multiplying named quantities. In both cases we assume that the unit expression
is dimensionally correct, our concern is to define how named quantities conduct themselves.
The operator \tplusNS $\!$ takes two named quantities and states the conditions under which they can be summed: \textit{two named quantities can be added together only if they 
represent the
same entity}, if one quantity is named but the other is not then it is necessary for the result to be named, and if both are unnamed then
the result will be too:
\[
\begin{array}{lclcll}
\verb!Named!\;n_{1} & \tplus & \verb!Named!\;n_{2}  & = & \verb!Named!\;n_{1}, &\mathrm{if} \ n_{1} = n_{2} \\
\verb!Named!\;n & \tplus & \verb!Noname!  & = & \verb!Named!\;n & \\
\verb!Noname! & \tplus & \verb!Named!\;n  & = & \verb!Named!\;n & \\
\verb!Noname! & \tplus & \verb!Noname!  & = & \verb!Noname! & \\
\end{array}
\]
Our comparison rules cast upwards from \verb!Noname! to \verb!Named!, so as to assume a named quantity whenever possible.
This is required to ensure named quantities behave correctly. If we cast downwards then we would have the alternative
rule $\verb!Named!\;n \tplus \verb!Noname!   =  \verb!Nonamed!$, that would allow  \verb!Work! to be added to \verb!Torque! 
through associativity: (\verb!Named "Work"! \tplus (\verb!Named "Torque"! \tplus \verb!Noname!)) 
$\Rightarrow$ (\verb!Named "Work"! \tplus \verb!Noname!) $\Rightarrow$ \verb!Noname!.

For multiplication the rules are simpler. The operator $\!\!$\ttimes takes in two named quantities and defines how they behave over the multiplication
operator. As \textit{multiplication sums the dimensions of the two operands, the value will be different to either and so the result will always be} \verb!Noname!.
\[
\begin{array}{lclcl}
\verb!Named!\;n_{1} & \ttimes & \verb!Named!\;n_{2} & = & \verb!Noname! \\
\verb!Named!\;n & \ttimes & \verb!Noname! & = & \verb!Noname! \\
 \verb!Noname! & \ttimes & \verb!Named!\;n  & = & \verb!Noname! \\
 \verb!Noname! & \ttimes & \verb!Noname!  & = & \verb!Noname! \\
 \end{array}
\]

\begin{figure}[ht]
\centering
\begin{math}
\begin{array}{|l|}
\hline \\
\ \ \mathcal{NE} \ : \ \mathit{uexp} \rightarrow (\mathit{uv} \rightarrow  \tt{quantname}) \rightharpoonup \tt{quantname} \\
\begin{array}{lcl}
\ \mathcal{NE} \strachey{\mathit{uv}}_{\tau} & = & \tau \ \mathit{uv} \\
 \ \mathcal{NE} \strachey{\mathit{uexp}_{1} \; \tt{+} \; \mathit{uexp}_{2} }_{\tau} & = & 
     \mathcal{NE} \strachey{\mathit{uexp}_{1}}_{\tau} \tplus  \mathcal{NE} \strachey{\mathit{uexp}_{2} }_{\tau} \\
\  \mathcal{NE} \strachey{\mathit{r} \; \tt{*} \; \mathit{uexp} }_{\tau} & = &  \mathcal{NE} \strachey{\mathit{uexp}}_{\tau} \\
\  \mathcal{NE} \strachey{\mathit{uexp}_{1} \; \tt{*} \; \mathit{uexp}_{2} }_{\tau} & = & 
     \mathcal{NE} \strachey{\mathit{uexp}_{1}}_{\tau} \ttimes  \mathcal{NE} \strachey{\mathit{uexp}_{2} }_{\tau} \\
    & & \\
\end{array} \\
 \hline 
 \end{array} 
\end{math}
\caption{Named quantity rules for unit expression.}
\label{fig:namedquantities}
\end{figure}

The named quantity algebra can be incorporated into our language as shown in Figure~\ref{fig:namedquantities}.
The language rules for scaler multiplication do not change the named quantity, the scaler value only affects the quantity value when evaluating expressions.
Consider the example where $\tau = \{\verb!t! \mapsto \verb!Named "Torque"!,  \verb!w! \mapsto \verb!Named "Work"!\}$. If we were
to try to validate $\verb!t! + \verb!w!$ with $\mathcal{NE}$ then the rule for addition will not succeed as there is no case for
(\verb!Named "Torque"!$\tplus$\verb!Named "Work"!).

\begin{theorem}\label{thm1}
For a given unit expression, \textit{uexp}, which does not include general multiplication, and an environment, 
$\tau$, binding unit variables to 
named quantities; $\mathcal{NE}\strachey{\mathit{uexp}}_{\tau}$ will only succeed if all named subexpressions
represent the same entity.
\end{theorem}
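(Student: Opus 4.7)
The natural approach is structural induction on $\mathit{uexp}$. Because general multiplication is excluded by hypothesis, only three cases arise: a unit variable $\mathit{uv}$, an addition $\mathit{uexp}_{1} + \mathit{uexp}_{2}$, and a scalar multiplication $\mathit{r} * \mathit{uexp}$. To make the induction go through, I would first strengthen the claim as follows: whenever $\mathcal{NE}\strachey{\mathit{uexp}}_{\tau}$ is defined and returns $q$, either $q$ equals \texttt{Noname} and $\mathit{uexp}$ has no named subexpression, or $q$ equals \texttt{Named}\,$n$ and every named subexpression of $\mathit{uexp}$ represents the entity $n$. The original theorem follows immediately.

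The base case $\mathit{uv}$ is trivial: $\mathcal{NE}\strachey{\mathit{uv}}_{\tau} = \tau(\mathit{uv})$, and the only subexpression is $\mathit{uv}$ itself, whose associated quantity is precisely $\tau(\mathit{uv})$. The scalar case reduces, by Figure~\ref{fig:namedquantities}, to $\mathcal{NE}\strachey{\mathit{uexp}}_{\tau}$, and the named subexpressions of $\mathit{r} * \mathit{uexp}$ coincide with those of $\mathit{uexp}$ since the rational constant $\mathit{r}$ contributes none, so the induction hypothesis lifts directly.

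The substantive case is addition. Suppose $\mathcal{NE}\strachey{\mathit{uexp}_{1} + \mathit{uexp}_{2}}_{\tau}$ succeeds; then both recursive calls must succeed, returning quantities $q_{1}$ and $q_{2}$, and the pair $(q_{1},q_{2})$ must match one of the four clauses defining $\tplusNS$. Crucially, no clause allows two distinct names to combine. If both $q_{1}$ and $q_{2}$ are \texttt{Named}, the matching clause forces a common name $n_{1} = n_{2} = n$, and by the induction hypothesis every named subexpression on either side represents $n$. If exactly one side is \texttt{Named}\,$n$, the induction hypothesis says the \texttt{Noname} side contributes no named subexpressions at all, while the other side is uniformly about $n$; the aggregate result \texttt{Named}\,$n$ is then consistent with the strengthened invariant. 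If both are \texttt{Noname}, neither side has named subexpressions and the combined expression inherits this absence.

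The main obstacle is the choice of strengthening. A naive phrasing such as ``success implies all named subexpressions share a name'' is true but too weak to propagate through the mixed \texttt{Named}/\texttt{Noname} addition case, because one has to know not only \emph{that} the named subexpressions agree but also \emph{which} entity the recursive result refers to. Tracking the name $n$ explicitly in the IH is what makes the cases align, and it is really the only subtlety the argument requires; the restriction ruling out general multiplication is essential, since the rule $\ttimes$ collapses two named operands to \texttt{Noname}, which would immediately break the invariant.
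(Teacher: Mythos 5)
Your proof is correct, and it follows the same basic strategy as the paper: structural induction on \emph{uexp}, with the variable and scalar cases immediate and the real work concentrated in the addition case, resolved by inspecting the four clauses of $\tplus$ and observing that no clause combines two distinct names. The execution differs in one worthwhile respect. The paper does not state an induction hypothesis explicitly; instead it handles the ``associative effect of nested subexpressions'' by enumerating five indicative cases of a triply-nested sum $(\mathit{uexp}_1 + \mathit{uexp}_2) + \mathit{uexp}_3$, which is really just the binary case applied twice. Your strengthened invariant --- the result is \texttt{Noname} exactly when there are no named subexpressions, and \texttt{Named}\,$n$ exactly when every named subexpression represents $n$ --- is the piece the paper leaves implicit, and it is what lets a plain one-level structural induction go through without having to consider any particular nesting shape. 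You are also right that the naive phrasing of the claim is too weak to propagate through the mixed \texttt{Named}/\texttt{Noname} clause, and that the exclusion of general multiplication is essential because $\ttimes$ returns \texttt{Noname} even from two named operands, which would falsify the invariant; the paper defers exactly this point to its treatment of function calls. In short, your version is a more rigorous rendering of the same argument rather than a different one.
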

\begin{proof}
by induction on unit expressions. The cases for unit variables, $\mathit{uv}$, and scaler multiplication, 
$\mathit{r} \; \tt{*} \; \mathit{uexp}$, are straightforward. The important cases revolve around addition as
we need to consider the potential associative effect of nested subexpressions. Consider 
$\mathit{uexp}_{1}$, $\mathit{uexp}_{2}$ and $\mathit{uexp}_{3}$ where:
\[
\mathcal{NE} \strachey{(\mathit{uexp}_{1} \; \tt{+} \; \mathit{uexp}_{2})  \; \tt{+} \; \mathit{uexp}_{3}}_{\tau}  =  
     (\mathcal{NE} \strachey{\mathit{uexp}_{1}}_{\tau} \tplus  \mathcal{NE} \strachey{\mathit{uexp}_{2} }_{\tau})
     \tplus  \mathcal{NE} \strachey{\mathit{uexp}_{3} }_{\tau}
\]
This will result in the following indicative 5 cases:
\begin{description}
\item[Case 1:] 
\begin{tabular}[t]{l}
(\verb!Named !$n_{1}$$\tplus\,$\verb!Named !$n_{2}$)$\tplus$\verb!Named !$n_{3}$ \\
$\Rightarrow$ \verb!Named !$n_{1}$$\tplus\,$\verb!Named !$n_{3}$, where $n_{1} = n_{2}$ by definition of\tplus\\
$\Rightarrow$ \verb!Named !$n_{1}$, where $n_{1} = n_{2} = n_{3}$ by definition of\tplus\\
\end{tabular}
\item[Case 2:] 
\begin{tabular}[t]{l}
(\verb!Named !$n_{1}$$\tplus\,$\verb!Named !$n_{2}$)$\tplus$\verb!Noname! \\
$\Rightarrow$ \verb!Named !$n_{1}$$\tplus\,$\verb!Noname!, where $n_{1} = n_{2}$ by definition of\tplus\\
$\Rightarrow$ \verb!Named !$n_{1}$, where $n_{1} = n_{2}$ by definition of\tplus\\
\end{tabular}
\item[Case 3:] 
\begin{tabular}[t]{l}
(\verb!Named !$n_{1}$$\tplus\,$\verb!Noname!)$\tplus$\verb!Named !$n_{3}$ \\
$\Rightarrow$ \verb!Named !$n_{1}$$\tplus\,$\verb!Named !$n_{3}$,  by definition of\tplus\\
$\Rightarrow$ \verb!Named !$n_{1}$, where $n_{1} = n_{3}$ by definition of\tplus\\
\end{tabular}
\item[Case 4:] 
\begin{tabular}[t]{l}
(\verb!Noname!$\tplus\,$\verb!Noname!)$\tplus$\verb!Named !$n_{3}$ \\
$\Rightarrow$ \verb!Noname!$\tplus\,$\verb!Named !$n_{3}$, by definition of\tplus\\
$\Rightarrow$ \verb!Named !$n_{3}$,  by definition of\tplus\\
\end{tabular}
\item[Case 5:] 
\begin{tabular}[t]{l}
(\verb!Noname!$\tplus\,$\verb!Noname!)$\tplus$\verb!Noname! \\
$\Rightarrow$ \verb!Noname!$\tplus\,$\verb!Noname!, by definition of\tplus\\
$\Rightarrow$ \verb!Noname!,  by definition of\tplus\\
\end{tabular}
\end{description}
which ensure that \textit{addition} maintains the property that named subexpressions must represent the same entity
for evaluation to succeed.
We will see how to ensure \textit{multiplication} can be made safe in Section~\ref{sub:functioncall}.
\end{proof}

Assignment statements have to satisfy the named quantity of the variable being assigned to, 
specifically the left hand side, and can therefore either succeed or fail.
The $\tassignNS$ rules specify that \textit{one can assign
a named quantity to a variable that has the same named quantity but not otherwise}. 
One can assign a \verb!Noname! value to a named quantity variable as it will have the same dimensions.
However, one cannot allow $\verb!Noname! \tassignNS \;  \verb!Named!\;n$  to $\verb!Succeed!$ as this would 
allow one to assign a \verb!Torque! value to a 
\verb!Work! variable through the intermediary of a local unnamed variable.
The solution is to create new variable bindings and update $\tau$ accordingly.
\[
\verb!type assignstate = Succeed of !(\mathit{uv} \rightarrow \; \verb!quantname!)\verb! | Fail!
\]
\noindent Where $uv$ denotes the variable being assigned to, the rules for $\!\tassign{\tau}{uv}\!$ will override the existing 
binding for $uv$ in the case where we try to assign a named entity to an unnamed variable:
\[
\begin{array}{lclcll}
\verb!Named!\;n_{1} & \tassign{\tau}{uv} & \verb!Named!\;n_{2}  & = &  \verb!Succeed! \;\tau, &\mathrm{if} \ n_{1} = n_{2} \\
\verb!Named!\;n_{1} & \tassign{\tau}{uv}  & \verb!Named!\;n_{2}  & = &  \verb!Fail!, &\mathrm{if} \ n_{1} \neq n_{2} \\
\verb!Named!\;n  & \tassign{\tau}{uv}  & \verb!Noname!  & = &   \verb!Succeed! \;\tau  \\
\verb!Noname!  & \tassign{\tau}{uv} & \verb!Noname!  & = &   \verb!Succeed! \;\tau \\
\verb!Noname! & \tassign{\tau}{uv}  & \verb!Named!\;n & = &   \verb!Succeed! \;\tau\oplus \{uv \; \mapsto \; 
\verb!Named!\;n\} \\
\end{array}
\]

This has a distinct effect on how we define our programming language rules to support the named quantity algebra.
One \textit{must} update the environment $\tau$
to reflect that the named quantity assignment has taken place, as shown in both $\mathcal{NSM}$ and $\mathcal{NS}$ 
rules of Figure~\ref{fig:namedassignments}. For instance, the program:
\medskip
\begin{verbatim}
begin
   t1 : float of Noname;
   t2 : float of Named T
   ...
     t1 := t2
end
\end{verbatim}
\medskip
will update $\tau$ so that \verb!t1! also has the kind \verb!Named "T"!.
This ensures that the bindings of unnamed values will reflect their usage and protect the code from erroneous assignments.
This is \textit{unlike} the rules for dimensional analysis where the environment mapping variables to their dimensions
does not change over their lifetime.
In order to guarantee coherence of potential changes to bindings in $\tau$, the environment is threaded through the rule for
statements and conditionals. 
If an initial \verb!Noname! unit variable is assigned a \verb!Name! then any subsequent attempt to 
redefine it will \verb!Fail!.

\begin{figure}[ht]
\centering
\begin{math}
\begin{array}{|l|}
\hline \\
\begin{array}{l}
\ \mathcal{NSM}\strachey{\mathit{ustmt}_{1}\tt{;} \; \ldots \tt{;}  \mathit{ustmt}_{\mathit{m}}}_{\tau}  \\
\hspace{0.6cm} \  \begin{array}{l} =  
\tt{Succeed}\;\tau_{\mathit{m}}, \ \mathrm{if}
  \begin{array}[t]{l} 
 \ \mathcal{NS}\strachey{\mathit{ustmt}_{1}}_{\tau} = {\tt Succeed}\;\tau_{1} \wedge \ \cdots \ \wedge \ \\
\ \mathcal{NS}\strachey{\mathit{ustmt}_{\mathit{m}}}_{\tau_{m-1}} = {\tt Succeed}\;\tau_{\mathit{m}} \\
 \end{array} \\
 =  \tt{Fail} , \mathrm{otherwise} \\
  \end{array} \\ 
  \\
\end{array} \\
\ \mathcal{NS}\openS\mathit{uv} \; \verb!:=! \; \mathit{uexp}\closeS_{\tau} 
      = (\tau \; \mathit{uv} ) \tassign{\tau}{\mathit{uv}}  (\mathcal{NE} \strachey{uexp}_{\tau}) \\ \\
\ \mathcal{NS}\strachey{\tt{if} \  \mathit{bexp} \ \tt{then}  \  \mathit{ustmts}_{1} \  \tt{else} \ \mathit{ustmts}_{2}}_{\tau} \\ \hspace{0.6cm} \begin{array}[t]{ll}
  = & \tt{Succeed} \;\tau_{2} , \ \mathrm{if} \ 
       \begin{array}[t]{l}\mathcal{NSM}\strachey{\mathit{ustmts}_{1}}_{\tau} = \tt{Succeed}\;\tau_{1}
      \  \wedge \\
       \mathcal{NSM}\strachey{\mathit{ustmts}_{2}}_{\tau_{1}} = \tt{Succeed}\;\tau_{2} \ \ \ \\
       \end{array}\\
  = & \tt{Fail} , \ \mathrm{otherwise} \\
  \end{array} \\
  \\
  \hline
\end{array} 
\end{math}
\caption{Named quantity rules for statements, assignments and conditionals.}
\label{fig:namedassignments}
\end{figure}

\subsection{Function Calls}

Low coupling is generally desirable, especially in large complicated programs which are common nowadays.
Functions enable one to specify a simple interface, to be self-contained, and to be reused. Functions 
are a convenient construct for making pieces of code written by different people or different groups interoperable.

Quantity functions, \textit{ufun}, differ from normal functions in that they can take a number of 
quantity arguments, and return a quantity, if the function body satisfies the quantity algebra. 
Two new syntactic constructs are required:

\bigskip

\noindent \noindent 
\begin{math}
\begin{array}{lcl}
\mathit{ufun} & ::= & \verb!fun! \; \mathit{ufn} {\tt (} uv_{1} {\tt :} \mathit{qn}_{1},\ldots,  uv_{m} {\tt :} \mathit{qn}_{m}{\tt ):} \mathit{qn}_{out}  \\
& & \verb!is! \; \mathit{uexp} \\
\mathit{uexp} & ::= & \ldots \; \lvert \; \mathit{ufn} {\tt (} uexp_{1},\ldots, uexp_{m} {\tt )} \\
\end{array}
\end{math}

\bigskip

\noindent Both a definition mechanism and an invocation mechanism are provided for named quantity functions,
as shown in Figure~\ref{fig:quantitycalls}. Function definitions are stored in a new environment, $\sigma$, binding
function names to their input parameters, expression body and returning quantity. 
Hence our rules for expressions, $\mathcal{NE}$, and assignments will need to be extended to pass 
this second environment around but otherwise stay unchanged.
On invocation we retrieve the definition from $\sigma$, and then build a local unit variable environment, $\tau'$,
which binds the parameters to their quantities.
We must then make sure that named quantities, $\mathit{qn}_{i}$, are safely assigned. 
To be specific, that the named quantity of each argument matches that of its parameter using the
$\!\tassign{\{\}}{uv_{i}}\!$ operator. This will create a single binding if need be, say a parameter is defined in the interface as 
\verb!Noname! but called with a \verb!Name "T"! value.
The initial binding will be overridden to become a value of kind \verb!Name "T"!. On completing the expression body, the
derived named quantity will be compared with that in the definition, using $\!\tplus\!$. Consequently, if the
derived quantity has a different name to that of the function definition, then the analysis cannot proceed. More importantly
though, is if the derived quantity has a \verb!Noname! kind then it can be given a KOQ on return.

\begin{figure} 
\centering
\begin{math}
\begin{array}{|l|}
\hline \\
\ \mathcal{DS}\strachey{\tt{fun} \; \mathit{ufn} \tt{(} \mathit{uv}_{1} \tt{:} \mathit{qn}_{1},\ldots,  \mathit{uv}_{\mathit{m}} 
\tt{:} \mathit{qn}_{\mathit{m}}\tt{):}\mathit{qn}_{out} \; \tt{is} \; \mathit{uexp}}_{\sigma} = \\
\hspace{1cm} \sigma \oplus \{((\mathit{uv}_{1}, \mathit{qn}_{1}),\ldots,  (\mathit{uv}_{m},\mathit{qn}_{m}),(\mathit{uexp},\mathit{qn}_{out}))\} \\
\\
\ \mathcal{NE}\strachey{\mathit{ufn} \tt{(} \mathit{uexp}_{1},\ldots, \mathit{uexp}_{\mathit{m}} \tt{)} }_{\tau\; \sigma} \\
\hspace{1cm} \begin{array}{ll}
= & \mathrm{let} \  ((uv_{1},\mathit{qn}_{1}),\ldots, (uv_{m},\mathit{qn}_{m}),(\mathit{uexp},\mathit{qn}_{out}) ) = \sigma \;  \textit{ufn} \\
&  \mathrm{let} \ \tau' = \{ uv_{1} \mapsto qn_{1}, \; \ldots \; uv_{m} \mapsto qn_{m} \} \\
& \mathrm{in} \ \mathit{qn}_{out} \tplus (\mathcal{NE}\strachey{\mathit{uexp}}_{(\tau' \oplus \,\tau_{1}\oplus\,\cdots\,\oplus \tau_{m})  \; \sigma}), \; \mathrm{if} \\
& \hspace{2cm}
      \begin{array}{l}
      \textit{qn}_{1} \tassign{\{\}}{uv_{1}} (\mathcal{NE}\strachey{\mathit{uexp}_{1}}_{\tau\; \sigma}) 
             = \tt{Succeed} \; \tau_{1} \ \wedge \ \ \ \\
      \hspace{0.6cm} \vdots \\
      \textit{qn}_{m} \tassign{\{\}}{uv_{m}} (\mathcal{NE}\strachey{\mathit{uexp}_{\mathit{m}}}_{\tau\; \sigma})
             = \tt{Succeed}\; \tau_{\mathit{m}} \\
      \end{array}  \\
\end{array} \\ \\
\hline 
\end{array}
\end{math}
\caption{Quantity Checking rules for Function Declarations and Invocation.}
\label{fig:quantitycalls}
\end{figure}


To illustrate how these collection of rules enable named quantity checking we consider a dimensionally correct assignment 
of \verb!nt := 2 * addtq(t1,t2)! with differing named quantity definitions.
In the first case we consider \verb!t1! and \verb!t2! to both
represent torque values, so that environment $\tau$ is 
$\{\verb!nt! \mapsto \verb!Named "T"!, \verb!t1! \mapsto \verb!Named "T"!,  \verb!t2! \mapsto \verb!Named "T"!\}$.
We also define \verb!addtq! to expect two torque quantities.
It will be stored in the function environment $\sigma$
as $\{\verb!addtq! \mapsto ((\verb!x!,\verb!Named "T"!),(\verb!y!,\verb!Named "T"!), (\verb!x+y!,\verb!Named "T"!)) \}$. 
The code fragment would look like this:
\medskip
\begin{verbatim}
begin
 nt : float of Named T;
 t1 : float of Named T;
 t2 : float of Named T;
 fun addtq (x:Named T,y:Named T):Named T = x+y
 ...
   nt := 2 * addtq(t1,t2)
end
\end{verbatim}
\medskip
Quantity checking the assignment would succeed as follows:
\[
\begin{array}{l}
\mathcal{NS}\openS \verb!nt := 2*addtq(t1,t2)! \closeS_{\{ \verb!nt! \mapsto \verb!Named "T"!, \verb!t1! \mapsto \verb!Named "T"!,  \verb!t2! \mapsto \verb!Named "T"!\} \;\sigma} \\
\Rightarrow (\tau \; \verb!nt! ) \tassign{\tau}{{\tt nt}}  (\mathcal{NE} \strachey{{\tt 2 * addtq(t1,t2)}}_{\tau}) \\
\Rightarrow (\verb!Named "T"!) \tassign{\tau}{{\tt nt}}  (\mathcal{NE} \strachey{{\tt addtq(t1,t2)}}_{\tau}) \\
\Rightarrow 
(\verb!Named "T"!) \tassign{\tau}{{\tt nt}} \\
\hspace{1cm}(\verb!Named "T"! \tplus (\mathcal{NE}\strachey{{\tt x+y}}_{(\{\verb!x! \mapsto \verb!Named "T"!,  \verb!y! \mapsto \verb!Named "T"!\} 
\oplus \,\tau_{1}\oplus \tau_{2})  \; \sigma}), \; \mathrm{if} \\
 \hspace{2cm}
      \begin{array}{l}
      \verb!Named "T"!  \tassign{\{\}}{{\tt x}} (\mathcal{NE}\strachey{{\tt t1}}_{\tau\; \sigma}) 
             = \tt{Succeed} \; \tau_{1} \ \wedge \ \ \ \\
      \verb!Named "T"!  \tassign{\{\}}{{\tt y}} (\mathcal{NE}\strachey{{\tt t2}}_{\tau\; \sigma})
             = \tt{Succeed}\; \tau_{2})\\
      \end{array}  \\
\Rightarrow 
(\verb!Named "T"!) \tassign{\tau}{{\tt nt}} \\
\hspace{1cm}(\verb!Named "T"! \tplus (\mathcal{NE}\strachey{{\tt x+y}}_{\{\verb!x! \mapsto \verb!Named "T"!,  \verb!y! \mapsto \verb!Named "T"!\}\;\sigma}), \; \mathrm{if} \\
 \hspace{2cm}
      \begin{array}{l}
      \verb!Named "T"!  \tassign{\{\}}{{\tt x}} \verb!Named "T"!
             = \tt{Succeed} \; \{\} \ \wedge \ \ \ \\
      \verb!Named "T"!  \tassign{\{\}}{{\tt y}} \verb!Named "T"!
             = \tt{Succeed}\; \{\}) \\
      \end{array}  \\
\Rightarrow \verb!Named "T"! \tassign{\tau}{{\tt nt}} (\verb!Named "T"! \tplus \verb!Named "T"!) \\
\Rightarrow \verb!Named "T"! \tassign{\tau}{{\tt nt}} \verb!Named "T"! \\
\Rightarrow \verb!Succeed !\tau
\\
\end{array}
\]
Alternatively, if we try a similar assignment, \verb!nt := 2 * addtq(t,w)!, but with quantities denoting torque and work:
\medskip
\begin{verbatim}
begin
 nt : float of Named T;
 t  : float of Named T;
 w  : float of Named W;
 fun addtq (x:Named T,y:Named T):Named T = x+y
 ...
   nt := 2 * addtq(t,w)
end
\end{verbatim}
\medskip
Such that
$\tau = \{\verb!nt! \mapsto \verb!Named "T"!, \verb!t! \mapsto \verb!Named "T"!,  \verb!w! \mapsto \verb!Named "W"!\}$ 
then the 
derivation cannot be completed as the parameter \verb!w! has the named quantity \verb!Named "W"! where 
a \verb!Named "T"! was expected:
\[
\begin{array}{l}
\mathcal{NS}\openS \verb!nt := 2*addtq(t,w)!\closeS_{\{ \verb!nt! \mapsto \verb!Named "T"!, \verb!t! \mapsto \verb!Named "T"!,  \verb!w! \mapsto \verb!Named "W"!\} \;\sigma} \\
\Rightarrow (\tau \; \verb!nt! ) \tassign{\tau}{{\tt nt}}  (\mathcal{NE} \strachey{{\tt 2 * addtq(t,w)}}_{\tau}) \\
\Rightarrow (\verb!Named "T"!) \tassign{\tau}{{\tt nt}}  (\mathcal{NE} \strachey{{\tt addtq(t,w)}}_{\tau}) \\
\Rightarrow 
(\verb!Named "T"!) \tassign{\tau}{{\tt nt}} \\
\hspace{1cm}(\verb!Named "T"! \tplus (\mathcal{NE}\strachey{{\tt x+y}}_{(\{\verb!x! \mapsto \verb!Named "T"!,  \verb!y! \mapsto \verb!Named "T"!\} 
\oplus \,\tau_{1}\oplus \tau_{2})  \; \sigma}), \; \mathrm{if} \\
 \hspace{2cm}
      \begin{array}{l}
      \verb!Named "T"!  \tassign{\{\}}{{\tt x}} (\mathcal{NE}\strachey{{\tt t}}_{\tau\; \sigma}) 
             = \tt{Succeed} \; \tau_{1} \ \wedge \ \ \ \\
      \verb!Named "T"!  \tassign{\{\}}{{\tt y}} (\mathcal{NE}\strachey{{\tt w}}_{\tau\; \sigma})
             = \tt{Succeed}\; \tau_{2})\\
      \end{array}  \\
\Rightarrow 
(\verb!Named "T"!) \tassign{\tau}{{\tt nt}} \\
\hspace{1cm}(\verb!Named "T"! \tplus (\mathcal{NE}\strachey{{\tt x+y}}_{\{\verb!x! \mapsto \verb!Named "T"!,  \verb!y! \mapsto \verb!Named "T"!\}\;\sigma}), \; \mathrm{if} \\
 \hspace{2cm}
      \begin{array}{l}
      \verb!Named "T"!  \tassign{\{\}}{\tt x} \verb!Named "T"!
             = \tt{Succeed} \; \{\} \ \wedge \ \mathrm{False} \\
       \end{array}  \\
\end{array}
\]
This form of named quantity error detection is labeled Type 1 KOQ error~\cite{marcuspython}.

We can use \verb!Noname! quantities in function interfaces to avoid having to commit to a given name, 
such as torque or work:
\smallskip
\begin{verbatim}
begin
 nt : float of Named T;
 t  : float of Named T;
 w  : float of Named W;
 fun addtq (x:Noname,y:Noname):Noname = x+y
 ...
   nt := 2 * addtq(t,w)
end
\end{verbatim}
\smallskip
In this case our function \verb!addtq! accepts \verb!Noname! quantities so $\sigma = \{\verb!addtq! \mapsto ((\verb!x!,\verb!Noname!),
(\verb!y!,\verb!Noname!), ((\verb!x+y!),\verb!Noname!)\}$. However, the function body will not proceed as both 
arguments to \verb!x+y! need to follow the named quantity rules for addition:
\[
\begin{array}{l}
\mathcal{NS}\openS \verb!nt := 2*addtq(t,w)!\closeS_{\{ \verb!nt! \mapsto \verb!Named "T"!, \verb!t! \mapsto \verb!Named "T"!,  \verb!w! \mapsto \verb!Named "W"!\} \;\sigma} \\
\Rightarrow (\tau \; \verb!nt! ) \tassign{\tau}{{\tt nt}}  (\mathcal{NE} \strachey{{\tt 2 * addtq(t,w)}}_{\tau}) \\
\Rightarrow (\verb!Named "T"!) \tassign{\tau}{{\tt nt}}  (\mathcal{NE} \strachey{{\tt addtq(t,w)}}_{\tau}) \\
\Rightarrow 
(\verb!Named "T"!) \tassign{\tau}{{\tt nt}} \\
\hspace{1cm}(\verb!Noname! \tplus (\mathcal{NE}\strachey{{\tt x+y}}_{(\{\verb!x! \mapsto \verb!Noname!,  \verb!y! \mapsto \verb!Noname!\} 
\oplus \,\tau_{1}\oplus \tau_{2})  \; \sigma}), \; \mathrm{if} \\
 \hspace{2cm}
      \begin{array}{l}
      \verb!Noname!  \tassign{\{\}}{{\tt x}} (\mathcal{NE}\strachey{{\tt t}}_{\tau\; \sigma}) 
             = \tt{Succeed} \; \tau_{1} \ \wedge \ \ \ \\
      \verb!Noname!  \tassign{\{\}}{{\tt y}} (\mathcal{NE}\strachey{{\tt w}}_{\tau\; \sigma})
             = \tt{Succeed}\; \tau_{2})\\
      \end{array}  \\
\Rightarrow 
(\verb!Named "T"!) \tassign{\tau}{{\tt nt}} \\
\hspace{1cm}(\verb!Noname! \tplus (\mathcal{NE}\strachey{{\tt x+y}}_{\{\verb!x! \mapsto \verb!Named "T"!,  \verb!y! \mapsto \verb!Named "W"!\}   \; \sigma}), \; \mathrm{if} \\
 \hspace{2cm}
      \begin{array}{l}
      \verb!Noname!  \tassign{\{\}}{{\tt x}} \verb!Named "T"!
             = \tt{Succeed} \; \{ {\tt x} \mapsto \verb!Named "T"!\} \ \wedge \ \ \ \\
      \verb!Noname!  \tassign{\{\}}{{\tt y}}  \verb!Named "W"!
             = \tt{Succeed}\; \; \{ {\tt y} \mapsto \verb!Named "W"!\} \\
      \end{array}  \\
\Rightarrow 
(\verb!Named "T"!) \tassign{\tau}{{\tt nt}} \\
\hspace{1cm}(\verb!Noname! \tplus (\mathcal{NE}\strachey{{\tt x+y}}_{\{\verb!x! \mapsto \verb!Named "T"!,  \verb!y! \mapsto \verb!Named "W"!\}   \; \sigma}) \\
\Rightarrow 
(\verb!Named "T"!) \tassign{\tau}{{\tt nt}} (\verb!Noname! \tplus (\verb!Named "T"! \tplus \verb!Named "W"!))
\end{array}
\]
This example does show how the local function environment is updated to reflect the kinds of entities passed into it.

\subsection{Safe Multiplication Through Function Calls}
\label{sub:functioncall}

The real potential of quantity functions is that they can re-establish a named quantity. This is of particular interest when
evaluating expressions containing multiplication as
the $\!\ttimes\!$ rules lose information on the kind of quantity generated.
An example of this Type 2 KOQ error~\cite{marcuspython} is the incorrect analysis of a turbine, 
of moment-of-inertia $I$ (SI unit of $kg\cdot m^{2}$) 
rotating with an angular velocity of $\omega_{1}$ ($s^{-1}$) 
with a torque $T$ $(kg\cdot m^{2}\cdot s^{-2})$
applied for duration $t$ in seconds. 
The initial kinetic energy $E_{1}$ is defined as $E_{1} = 0.5 * I * \omega_{1}^{2}$. 
It is easy to code this quantity equation incorrectly 
as $E_{1}= 0.5 * I / t^{2}$, where the units of both sides of the assignment $(kg \cdot m^{2} \cdot s^{-2})$ are compatible 
but the kind of quantity of the unit expression is \verb!Noname!:
\medskip
\begin{verbatim}
begin
 e  : float of Named T;
 i  : float of Named MI;
 t  : float of Named S
  e :=  0.5 * i / (t*t)
end
\end{verbatim}
\medskip
Our dimensional analysis rules would notice the UoM compatibility of the assignment, while our quantity checking rules 
would evaluate the assignment expression, \verb!0.5 * i / (t*t)!, as having \verb!Noname!, so the assignment to \verb!e!,
through $\!\tassign{\tau}{{\tt e}}\!$, would \verb!Succeed!.

However if we demand a discipline of programming with quantities where expressions involving multiplication 
are promoted to functions then we can ensure that results have a known named quantity:
\medskip
\begin{verbatim}
begin
 e  : float of Named T;
 i  : float of Named MI;
 v  : float of Named AV;
 fun kin_energy (I:Named MI,w:Named AV):Named T = 0.5*I*(w*w)
 ...
   e := kin_energy (i,v)
end
\end{verbatim}
\medskip
In this second case, the arguments to \verb!kin_energy! have to represent moment of inertia, \verb!Named MI!, and
angular velocity, \verb!Named AV!, kinds of quantities. On completion the function will return a torque quantity,
\verb!Named T!, so subsequent calculations can be undertaken safely. 
The function behaves like a contract even through we cannot
ascertain the KOQ of \verb!0.5*I*(w*w)! directly using our algebra.

\begin{theorem}\label{thm2}
For a given unit expression, \textit{uexp}, in which general multiplication is undertaken within a function that 
returns a named quantity, along with a variable environment, $\tau$, and a function environment $\sigma$; 
$\mathcal{NE}\strachey{\mathit{uexp}}_{\tau\;\sigma}$ will only succeed if all named subexpressions
represent the same entity.
\end{theorem}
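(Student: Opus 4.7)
\textbf{Proof plan for Theorem~\ref{thm2}.}
The plan is to proceed by structural induction on $\mathit{uexp}$, augmenting Theorem~\ref{thm1}'s case analysis with the two new constructors: general multiplication $\mathit{uexp}_{1} \,\mathtt{*}\, \mathit{uexp}_{2}$ and function invocation $\mathit{ufn}(\mathit{uexp}_{1},\ldots,\mathit{uexp}_{m})$. The cases for unit variables, scalar multiplication and addition carry over verbatim, since the rules of $\mathcal{NE}$ in Figure~\ref{fig:quantitycalls} merely thread the function environment $\sigma$ through unchanged, leaving the five associative subcases from the earlier proof intact.

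For a general multiplication, $\ttimes$ always evaluates to \verb!Noname! by its defining equations, so the resulting subexpression imposes no naming constraint on any enclosing additive context. The induction hypothesis already guarantees that each factor individually respects the invariant, which suffices for this case.

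The crux of the proof is the invocation rule. First I would apply the induction hypothesis to each argument $\mathit{uexp}_{i}$; then the success requirement $\mathit{qn}_{i} \tassign{\{\}}{uv_{i}} \mathcal{NE}\strachey{\mathit{uexp}_{i}}_{\tau\,\sigma} = \mathtt{Succeed}\;\tau_{i}$, combined with the definition of $\tassignNS$, forces the argument either to carry exactly the name $\mathit{qn}_{i}$ or to be \verb!Noname! (with $\tau_{i}$ recording the upgrade). Next, I would apply the induction hypothesis to the body $\mathit{uexp}$ under the environment $\tau' \oplus \tau_{1} \oplus \cdots \oplus \tau_{m}$, inheriting the invariant for every addition that occurs inside the body. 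Finally, the outer $\mathit{qn}_{out} \tplus \mathcal{NE}\strachey{\mathit{uexp}}_{\ldots}$ is governed by the same five-case analysis of $\tplus$ as in Theorem~\ref{thm1}, ensuring that the declared return quantity and the quantity computed by the body agree whenever both are named, and that a \verb!Noname! body result is silently promoted to $\mathit{qn}_{out}$.

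The hard part will be justifying the induction on the function body, since the body is not a syntactic subterm of the call site but is recovered from $\sigma$. I would resolve this by assuming that $\sigma$ is well-formed, in the sense that each entry $((uv_{1},\mathit{qn}_{1}),\ldots,(\mathit{uexp},\mathit{qn}_{out}))$ has been checked against the same invariant in isolation prior to any invocation, and by ordering the induction lexicographically on the pair \emph{(call-nesting depth, syntactic size of }$\mathit{uexp}$\emph{)}. With that measure in place, the theorem captures precisely the intended slogan of the discipline: general multiplication locally destroys a named quantity, but a function call reinstates one via its declared contract.
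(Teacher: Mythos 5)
Your proposal is sound, but it takes a noticeably heavier route than the paper. The paper's own proof does not descend into the function at all: it treats a call $\mathcal{NE}\strachey{\mathit{ufn}(\mathit{uexp}_{1},\ldots,\mathit{uexp}_{m})}_{\tau\,\sigma}$ as a \emph{new base case} that, by the theorem's own hypothesis (the function returns a named quantity), contributes a \verb!Named! value to the surrounding additive context, and then simply checks the two successful \tplus{} combinations that can arise when that value meets another subexpression (\verb!Named!~$n_{1}\tplus$\verb!Named!~$n_{2}$ with $n_{1}=n_{2}$, and \verb!Noname!$\,\tplus\,$\verb!Named!~$n$). Everything you do beyond that --- applying the induction hypothesis to the arguments via $\tassign{\{\}}{uv_{i}}$, re-entering the body under $\tau'\oplus\tau_{1}\oplus\cdots\oplus\tau_{m}$, and introducing a lexicographic measure to justify inducting through $\sigma$ --- is extra work the paper leaves implicit; it buys genuine rigour (the paper's invocation rule really does re-check the body at each call site, so your well-foundedness concern is legitimate and unaddressed in the paper), at the cost of needing the well-formedness assumption on $\sigma$. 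One caution: your standalone case for general multiplication at the top level of the induction sits uneasily with the theorem's hypothesis, which deliberately confines $\ttimes$ to function bodies. A bare product evaluating to \verb!Noname! and being silently upcast into a named additive context is exactly the Type~2 KOQ error the discipline is designed to exclude (the $e := 0.5*i/(t*t)$ example); the conclusion survives only because the statement quantifies over \emph{named} subexpressions, so the unnamed product escapes vacuously. It would be cleaner to say that this constructor simply does not occur outside function bodies under the stated hypothesis, rather than to argue it is harmless.
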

\begin{proof} is an extension of Theorem~\ref{thm1}. The additional base case of the function call
$\mathcal{NE}\strachey{\mathit{ufn} {\tt (} uexp_{1},\ldots, uexp_{m} {\tt )} }_{\tau\;\sigma}$ will, by definition, yield
a named quantity. We therefore need to show the two successful cases that arise when such a function is used within a 
nested subexpression involving another $\mathit{uexp}$:
\[
\begin{array}{l}
\mathcal{NE} \strachey{\mathit{uexp} \; {\tt+} \; \mathit{ufn} {\tt (} uexp_{1},\ldots, uexp_{m} {\tt ) }}_{\tau\;\sigma}  \\
\hspace{2cm} =  
     (\mathcal{NE} \strachey{\mathit{uexp}}_{\tau\;\sigma} )
     \tplus  (\mathcal{NE} \strachey{\mathit{ufn} {\tt (} uexp_{1},\ldots, uexp_{m} {\tt )} }_{\tau\;\sigma})
\end{array}
\]
\begin{description}
\item[Case 1:] 
\begin{tabular}[t]{l}
\verb!Named !$n_{1}$$\tplus\,$\verb!Named !$n_{2}$ \\
$\Rightarrow$ \verb!Named !$n_{1}$, where $n_{1} = n_{2} $ by definition of\tplus \\
\end{tabular}
\item[Case 2:]
\begin{tabular}[t]{l}
\verb!Noname!$\tplus\,$\verb!Named !$n$ \\
$\Rightarrow$ \verb!Named !$n$, by definition of\tplus \\
\end{tabular}
\end{description}
Functions that return a named quantity are safe in their usage of KOQ as subsequent arithmetic can only
proceed if both arguments represent the same quantity.
\end{proof}

\begin{quote}
``We must not forget that it is not our business to make programs, 
it is our business to design classes of computations that will display a desired behaviour.'' (Dijkstra 1972)
\end{quote}

Functions lend themselves towards the development of programming libraries, collections of behaviour, useful for
specific applications which can be used by multiple programs that have no connection to each other.
By extending the interface to include units and kind of quantity information we can ensure that dimensional analysis is
performed and quantities are handled safely. Thereby suggesting that scientific programmers should develop libraries
of functions that are correctly annotated such that checkers can detect misuse of quantities in their code.
Moreover by encouraging library development, the subsequent annotation burden is reduced.

Explicitly named quantity parameters protect the function body but do not allow commonalities to be exploited. 
Having to explicitly name each parameter quantity is cumbersome and minimises reusability as many functions would have to be duplicated. We can use \verb!Noname! quantities to avoid having to commit to a given name, such as torque or work,
but this defeats the purpose of our approach.
Extending named quantities to include  \textit{named quantity variables}, \verb!Quantvar!, 
so that we could write generic quantity functions is preferable as equivalences could be easily defined:
\smallskip
\begin{verbatim}
     fun add (x:Quantvar q,y:Quantvar q):Quantvar q = x+y
\end{verbatim}
\smallskip
In this case the variable \verb!q! could be assigned to any named quantity but both \verb!x! and \verb!y! would have the same named quantity.
Our function invocation rule can be extended to ensure named quantity variables are uniquely assigned, 
and the named quantity of each argument that shared the same quantity variable were equal using a union-find
data structure.

\subsection{Discussion}

Our algebra of named quantities is intended to be implemented as part of the static analysis phase of a compiler for an existing programming language or scientific
domain specific language.
Incorporating named quantities into software models would expand their use, and increase the robustness of designs.
Although named quantity and dimensional analysis are effective at discovering errors early, there are three concerns that impede their 
adoption~\cite{UoMsoftprac20}.
\begin{itemize}
\item \textit{Lack of Awareness:} many developers are totally unaware of software solutions that deal with quantities and UoM.
Inertia arises from factors like tradition, fear of change and effort of learning something new.
Our approach is intended for language extensions or a pluggable type system, 
but could equally be included in popular UoM libraries,
both of which elicit change and adaptation.
\item \textit{Technical Internal Factors:} many solutions are awkward and imprecise, introducing a loss of precision
and struggling at times with dimensional consistencies. The strength of a language based solution, versus a library one, is that
these issues are reduced.
\item \textit{External Factors:} modern systems are not built in a vacuum but form part of an eco-system~\cite{lungoEco}.
It is harder to argue for quantity annotations when values pass through numerous generic
components that do not support them, such as legacy systems, databases, spreadsheets, graphics tools and 
many other components that are unlikely to support quantities without costly updates. 
Efforts are underway to address this essential issue~\cite{codata} but
ensuring that units are routinely documented for easy, unambiguous exchange of information requires
a multi-stakeholder approach~\cite{stopquand}.
\end{itemize}
A lightweight and comprehensive language solution is fundamental to adoption.
However, a relevant observation from both a survey of UoM libraries~\cite{unit_oscar18} and interviews with practitioners~\cite{omar20}
was the need for quantities to be available at run-time. There are many mature and active UoM libraries for popular dynamic object
oriented programming languages, such as Python and Ruby, in which no static checking will occur.
If we allow dimensions to be only available at run-time then quantity checking, and also unit conversion, will have to be undertaken
while the program is executing. 
Faults can be avoided but testing is required to ensure annotations are congruent, 
a feature that a static programme analysis will uncover prior to evaluation.

\section{Conclusion}
\label{conc}

We have developed a simple algebra of named quantities and shown how it can be incorporated into an existing programming language.
Thereby ensuring quantities that share the same units of measure are handled separately, if required. 
Our algebra is safe and allows a degree of flexibility through casting upwards from unnamed quantities to named ones, thus enabling greater code reuse
and a more practical implementation of the concept. Unnamed quantities can be assigned to named one's but not the other way around as
this breaks substitutability. 
Our theorems address both KOQ errors discussed in~\cite{marcuspython} that are not detectable by 
conventional UoM checkers.
The algebra is distinct to that of dimensional analysis but they can be combined to form a single pass validator, allowing 
aliases (such as \verb!J! for work or \verb!N! for force) and standard named mathematical functions to be part of a prelude.

We lack an unqualified estimate of how frequently unit inconsistencies occur or their cost.
Anecdotally we can glean that it is not negligible from experiments described in the 
literature~\cite{CellmlCooper,SpreadsheetValidate,lightweight,ore_uniterrors}.
These focus on checking existing scientific repositories and are not representative of any quantity adhering software discipline.
They have been applied post-development whereas we are seeking to support developers by ensuring their code bases
model their scientific domain.

Quantity checking, much like type checking, is extremely useful when developing as errors are found before any code is run. 
Data on existing systems that have been extensively tested are less indicative of coding practices.
However annotating all unit variables in a programme is costly. 
Ore~\cite{ore_typeburden} found subjects choose a correct
UoM annotation only 51\% of the time and take an average of 136 seconds to make a single correct annotation.
This explains the appeal of systems that try to lower the burden while still ensuring coverage through unit type variables
and a solver~\cite{kennedy94dimension,Osprey,cunits,punits}, 
or compromise coverage through a component based approach~\cite{damevski,lightweight}.
A component based discipline means that the consequences of local unit
mistakes are underestimated. On the other hand, it allows diverse teams to collaborate even if their domain specific environments 
or choice of quantity systems are dissimilar.

If all quantity type variables are resolved by the static checker then dimensional correctness can be shown. 
A corpus of quantity errors, their programming language and associated software systems 
is required to assess the significance of annotations and the cost of developing without.
We have no idea of how much time is spent chasing incompatible quantity assignments, dimension errors or incorrect unit conversions.
We do know that there are many reasons for not adopting a quantity discipline based approach~\cite{omar20}.
Different stakeholders will have different robustness concerns and willingness to compromise on the proportion of 
quantity and unit
annotations required. Addressing usability concerns is an important aspect of our research.

Our tool currently performs static named quantity and dimension analysis for a simple imperative language.
It also searches for the least number of unit conversions to reduce round-off errors in generated code.
The main design principle was based on extending the Quantity pattern with added functionality. 
However, we feel that this strategy does not address the usability concerns of many scientific coders
who require a lightweight bespoke solution to managing UoM as opposed 
to a multi-purpose and inherently cumbersome one.
Consequently we are working on a customisable approach leveraging generics and staged computation.


\section*{Declarations}

Not applicable.

\bibliography{discquant}

\appendix

\end{document}